\newcommand{\iexpe}[1]{\langle#1\rangle}
\newcommand{\iintx}[1]{\int_{\mathcal{X}}#1\mathrm{d}x}
\newcommand{\isumx}{\sum_{x \in \mathcal{X}}}
\newcommand{\isumm}{\sum_{m=1}^M}
\newcommand{\isumn}{\sum_{i=1}^n}
\newcommand{\ib}[1]{\left(#1\right)}
\newcommand{\iz}{\widehat{Z}}
\newcommand{\ic}{\mathcal{C}}
\newcommand{\ix}{\mathcal{X}}
\newcommand{\ig}{\Gamma}
\newcommand{\id}{\enspace.}
\newcommand{\ico}{\enspace,}
\newcommand{\idiff}[2]{\frac{\partial\; #1}{\partial #2}}
\newcommand{\imin}[1]{\arg\min\limits_{#1}}
\newcommand{\iln}[2]{\ln_q \frac{#1}{#2}}
\newtheorem{thm}{Proposition}
\begin{document}
\title{On Shore and Johnson properties for a Special Case of Csisz\'{a}r
$f$-divergences}
\author{Jithin Vachery and Ambedkar Dukkipati\\
Dept of CSA, IISc\\
\{jithinvachery, ambedkar\} @csa.iisc.ernet.in
}
\date{}
\maketitle
\thispagestyle{empty}
\bibliographystyle{unsrt}

\begin{abstract}
The importance of power-law distributions is attributed to
the fact that most of the naturally occurring phenomenon exhibit this
distribution. While exponential distributions can be derived by minimizing
KL-divergence w.r.t some moment constraints, some power law distributions can be
derived by minimizing some
generalizations of KL-divergence (more specifically some special cases of
Csisz\'{a}r $f$-divergences).
Divergence minimization is very well studied in information theoretical
approaches to statistics.
In this work we study properties of minimization of
Tsallis divergence, which is a special case of Csisz\'{a}r $f$-divergence. In
line with the work by Shore and Johnson
(IEEE Trans. IT, 1981), we  examine the properties exhibited
by these minimization methods including the Pythagorean property.
\end{abstract}

\section{INTRODUCTION}
Shannon measure of information, also called entropy, is central to information
theory which has wide range of applications spanning, communication theory,
statistical mechanics, probability theory, statistical inference etc.
\cite{CoverThomas:1991:ElementsOfInformationTheory}. It quantifies uncertainty
or information that is associated with a discrete random variable by taking an
average of uncertainty (Hartley information) associated with each state. The
first generalization of this measure of information was suggested by
R$\acute{e}$nyi \cite{Renyi:1965:OnTheFoundationsOfInformationTheory}. He
replaced the linear averaging by K-N averages (Kolmogrov-Nagumo averages) and
imposed additivity constraint. Havrda and Charvat
\cite{HavrdaCharvat:1967:QuantificationMethodOfClassificationProcess} introduced
one more generalization which is now
known as nonextensive entropy or Tsallis entropy
\cite{TatsuakiTakeshi:2001:WhenNonextensiveEntropyBecomesExtensive,
Suyari:2004:GeneralizationOfShannonKhinchinAxioms,
Furuichi:2005:OnUniquenessTheormForTsallisEntropyAndTsallisRelativeEntropy},
which has been studied in statistical mechanics.\\
Another important notion is that of finding the distance or divergence between
two probability distributions. The information measure capturing this is
KL-divergence, which is the directed distance between two probability
distributions. KL-divergence is a special case of Tsallis divergence, which in
turn is a special case of Csisz\'{a}r $f$-divergence
\cite{CsiszarShields:2004:InformationTheoryAndStatistics}. KL-divergence plays a
central role in Kullback's minimum divergence principle, Which is a means
of estimating the probability distribution of a system. It  suggests the
minimization of KL-divergence using a given prior distribution, subject to
moment constraints as the estimation technique. Kullback's minimum divergence
principle reduces to Jaynes maximum entropy principle when we use uniform
distribution as the prior. Kullback's minimum divergence principle can be
extended to generalized divergences. When applied to classical KL-divergence,
this yields a distribution from the exponential family. Whereas applying
Kullback's principle  to Tsallis divergence gives a power-law distribution.\\
Exponential distributions are very important class of distributions and many
problems have been successfully modeled using this \cite{clark:2004:primer}.
Though exponential distributions are used in many modeling problems
\cite{bishop2006pattern} due to theoretical tractability, many naturally
occurring phenomena  exhibit power-law distributions. It is of great practical
and theoretical interest to study both these family of distributions.\\
In this work we have been able to establish many properties for Tsallis
divergence. We have established the property of \textit{transformation
invariance} and \textit{subset independence}. In addition we have found some
properties for Tsallis divergence minimization in classical constraints viz.
\textit{uniqueness}, \textit{reflexiveness}, \textit{idempotence},
\textit{invariance}, \textit {weak subset independence} and \textit{subset
aggregation}. In this work we have also attempted to derive a
\textit{Pythagorean property}. In addition we have proposed a $q \leftrightarrow
2-q$ \textit{additive transformation} for Tsallis divergence. \\
The paper is organized as follows. In Section \ref{sec2} we introduce the
preliminaries and basics required for understanding the results. Sections 
\ref{sec3} through \ref{snjmini} are dedicated to the results and observations
made. In
these sections we perform Tsallis divergence minimization for classical
constraints and we follow it up with the analysis of the properties exhibited by
the same. In particular we are study about the Shore and Johnson properties. In
the subsequent section we discuss about the a transformation relation which we
established.

\section{Preliminaries and Background}\label{sec2}
\subsection{Exponential family and KL Divergence}
In many of the problems we might have a prior estimate of the probability
distribution and given such a prior we are interested in finding
the probability distribution that is closest to this prior, which also satisfies
the set of linear constraints. To define the notion of closeness we need a
distance measure between two distributions. One such distance measure is KL
divergence \cite{Kullback:1959:InformationTheoryAndStatistics} defined as
\[
I(p||r) = \isumx p(x) \ln \left(\frac{p(x)}{r(x)}\right)\ico
\]
where $r$ is the prior. The minimization of KL-divergence results in a posterior
which is from the exponential family.
\subsection{Power-Law distribution and Generalized Divergence}
$f$-divergence is a generalized measure of divergence, that was
introduced by Csis$\acute{z}$ar
 \cite{CsiszarShields:2004:InformationTheoryAndStatistics} and independently
by Ali \& Silvey \cite{Ali_Silvey:1966:AGeneralClassOfCoefficientsOfDivergence}.
Let $f (t)$ be a real valued convex function defined for $t > 0$, with $f (1) =
0$. The $f$-divergence of a distribution $p$ from $r$ is defined by
\[
D_f(p||r) = \isumx  r(x) f\left(\frac{p(x)}{r(x)}\right)\id
\]
Here we take $0f\left(\frac{0}{0}\right) = 0, f(0) = \lim_{t \to 0} f(t).$
% and
% $0f\left(\frac{a}{0}\right) = \lim_{t \to 0}t f\left(\frac{a}{t}\right) = a
% \lim_{u \to \infty} f\left(\frac{f(u)}{u}\right).$
$f$-divergence has many important properties like non-negativity, monotonicity
and convexity. This has been used in many applications like speech recognition
\cite{qiao2010study}, analysis of contingency tables 
\cite{CsiszarShields:2004:InformationTheoryAndStatistics}, etc. By specializing
$f$ to various functions we get different divergences like KL-divergence,
$\chi^2$-divergence, Hellinger distance, variational distance,
Tsallis-divergence, etc. On setting $f(t)= t\ln_qt$ we get Tsallis divergence
\cite{TatsuakiTakeshi:2001:WhenNonextensiveEntropyBecomesExtensive}, defined as 
\begin{equation*}
I_q(p||r) = - \isumx  p(x) \ln_q\frac{r(x)}{p(x)}\ico
\end{equation*}
where $\ln_q$ is $q$-logarithm function
\cite{Borges:2004:ApossibleDeformedAlgebra}, defined as, $\ln_qx =
\frac{x^{(1-q)}-1}{1-q} \quad (x>0, q \in \mathbb{R})$.
Tsallis divergence recovers KL-divergence for $q \rightarrow 1$ i.e., $
\lim_{q \to 1} I_q(p||r) = I(p||r)$.
For values of $q>0$ we have $I_q(p||r) \ge 0 $ and Tsallis divergence becomes
a convex function of both the parameters. Tsallis divergence also exhibits
pseudo
additivity property, i.e., $ I_q(X1\times X2||Y1\times Y2) = I_q(X1||X2)
\oplus_q
I_q(Y1||Y2)$,
where $X1$ and $X2$ are independent, so are $Y1$ and $Y2$. Here $\oplus_q$ is
addition in $q$-deformed algebra \cite{Borges:2004:ApossibleDeformedAlgebra}
defined as, $x\oplus_qy = x+y+(1-q)xy$. In the minimization of Tsallis
divergence the choice of constraints play an important role
\cite{TsallisMendesPlastino:1998:TheRoleOfConstraints}.\\
Tsallis Divergence minimization with respect to $q$-expectation constraint has
been studied by
\cite{
BorlandPlastinoTsallis:1998:InformationGainWithinNonextensiveThermostatistics}.
In this case Pythagoras theorem is established by
\cite{DukkipatiMurtyBhatnagar:2006:NonextensiveTriangleEquality,
Dukkipati:2007:NonextensivePythagorasTheorem,
DukkipatiMurtyBhatnagar:2005:PropertiesOfKullback-LeiblerCrossEntropyMinimization}
and proved in differential geometric setup by Ohara
\cite{ohara2007geometry}.\\
% Where $\exp_q$ and $\otimes_q$ are exponentiation and product in $q$-deformed
% algebra \cite{Dukkipati:2006:PhdThesis}, and are defined as,
% \begin{align*}
%   \exp_q(x) &= \left\{
%   \begin{array}{l l}
%     [1+(1-q)x]^\frac{1}{1-q} & \quad \text{if } 1+(1-q)x \ge 0\\
%     0 & \quad \text{otherwise}\ico\\
%   \end{array} \right.\\
%   x \otimes_q y &=  \left\{
%   \begin{array}{l l}
%     (x^{1-q} +y^{1-q} - 1)^\frac{1}{1-q} & \text{if } x,y \ge 0,\\
%     &(x^{1-q} +y^{1-q} \\&- 1) > 0\\
%     0 & \text{otherwise}\ico\\
%   \end{array} \right.  
% \end{align*}
Tsallis divergence minimization with normalized constraints gives probability
distribution which is self referential in nature, i.e., $p(x)$ depends of
$p(x)$. Here too we have nonextensive Pythagoras property 
\cite{DukkipatiMurtyBhatnagar:2006:NonextensiveTriangleEquality,
Dukkipati:2007:NonextensivePythagorasTheorem} exhibited by Tsallis-divergence.\\
In this paper we are going to study this minimization with respect to classical
expectations, as it has the important property of convexity, ensuring a unique
solution.
\section{Basic Shore and Johnson Properties}\label{sec3}
Shore and Johnson \cite{Shore:1981:PropertiesOfCrossEntropyMinimization} in
their
work in 1981 had discussed many of the important properties of KL-divergence
minimization. We have found that many of those properties hold in the case of
Tsallis divergence. In this section we shall discuss about the properties that
pertain to Tsallis divergence, i.e., regardless of minimization.\\
In this section and section \ref{snjmini} we shall be using the following
notation.\\
Let $p$ be a pmf. on random variable $X$ taking values from $\mathcal{X}$. We
would like to impose the following linear equality and inequality constraints on
it.
\begin{align}
 \isumx p(x) &= 1 \label{const1}\ico\\
 \isumx p(x)u_m &= \iexpe{u_m}\quad m= 1\dots M \label{const2}\ico\\
 \isumx p(x)w_n &\ge \iexpe{w	_n}\quad n= 1\dots N \id\label{const3}
 \end{align}
Equations \eqref{const1},\eqref{const2} and \eqref{const3} constitute the
constraint set. This can also be considered as the information available about
the probability distribution. We shall denote a constraint set by $\ic$, and a
subscript to distinguish between different constraint sets.\\
Hence the task of divergence minimization can be viewed as,  given a
prior probability distribution $q(x)$ and constraint set $\ic$ finding  the
probability distribution $p_{min}$ such that $p_{min} = \imin{p\in\ic}
I_q(p||r)$.
It can be easily verified that the constraint set $\ic$ constitutes a convex
set.
% Another notation that we would use is that for Conditional density. We shall
% represent it as
% \begin{equation*}
%  p(x|x\in S) = \frac{p(x)}{\sum_{y\in S} p(y)}, \quad S \subseteq
% \mathcal{X}\id
% \end{equation*}
We would like to inform that some of these notation have been borrowed from
 \cite{Shore:1981:PropertiesOfCrossEntropyMinimization}.\\
Invariance of KL-divergence to coordinate transformations enables us to
generalize KL-divergence to continious random variables. We have observed that
the invariance property holds true in the case of Tsallis divergence too.
\begin{thm}[Invariance]
 Let $\Gamma$ be a coordinate transformation from $x\in \ix$ to $y\in
\ix'$ with $(\Gamma p)(y) = J^{-1}p(x)$, where $J$ is the Jacobian $J =
\partial(y)/\partial(x)$. Let $\Gamma\mathcal{X}$ be the set of densities
$\Gamma p$ corresponding to densities $p \in \mathcal{X}$. Let $(\Gamma\ic)
\subseteq (\Gamma\ix)$ correspond to $\ic \subseteq \ix$. Then, given a prior
distribution $r$
\begin{align}
 \imin{p \in \ig\ic} I_q(p||\ig r) &= \imin{s
\in\ic}I_q(s||r)\label{res1_1}\ico\\
 \text{and }
 I_q(\ig p_{min}||\ig r) &= I_q(p_{min}||r)\label{res1_2}\ico
\end{align}
hold. where $\ig p_{min} = \imin{p \in \ig\ic} I_q(p||\ig r)$ and $p_{min} =
\imin{s \in\ic}I_q(s||r)$.
\end{thm}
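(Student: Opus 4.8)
The plan is to reduce everything to one structural observation: the Tsallis divergence sees $p$ and $r$ only through the likelihood ratio $r(x)/p(x)$ and through the measure $p(x)\,\mathrm{d}x$, and the transformation $\ig$ preserves both. Since coordinate transformations and the Jacobian only make sense for densities, I would first declare that $I_q$ is to be read in its integral form $I_q(p\|r) = -\iintx{p(x)\iln{r(x)}{p(x)}}$, and that $\ig$ is a smooth invertible change of variables with nonvanishing Jacobian $J = \partial(y)/\partial(x)$, so that it acts as a genuine bijection on densities and carries $\ic$ onto $\ig\ic$. With $(\ig p)(y) = J^{-1}p(x)$ and $(\ig r)(y) = J^{-1}r(x)$, the Jacobian factor cancels in the ratio, giving $(\ig r)(y)/(\ig p)(y) = r(x)/p(x)$, and hence $\iln{(\ig r)(y)}{(\ig p)(y)} = \iln{r(x)}{p(x)}$; the argument of $\ln_q$ is left untouched, so no special handling of the $q$-logarithm is needed.

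The second step is the change of variables $\mathrm{d}y = J\,\mathrm{d}x$ inside $I_q(\ig p\|\ig r) = -\int_{\ix'}(\ig p)(y)\iln{(\ig r)(y)}{(\ig p)(y)}\,\mathrm{d}y$. Substituting $(\ig p)(y) = J^{-1}p(x)$ together with the ratio identity from the first step, the $J^{-1}$ coming from the density cancels the $J$ coming from the volume element, leaving exactly $-\iintx{p(x)\iln{r(x)}{p(x)}} = I_q(p\|r)$. This establishes the pointwise invariance $I_q(\ig p\|\ig r) = I_q(p\|r)$ for \emph{every} density $p$, which is strictly stronger than \eqref{res1_2} and recovers \eqref{res1_2} immediately upon setting $p = p_{min}$.

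For \eqref{res1_1} I would use that $\ig$ is a bijection between $\ic$ and $\ig\ic$, which is built into the hypothesis that $\ig\ic$ corresponds to $\ic$. Every $p \in \ig\ic$ is $\ig s$ for a unique $s \in \ic$, and by the previous step $I_q(\ig s\|\ig r) = I_q(s\|r)$; thus the objective restricted to $\ig\ic$ is nothing but the objective on $\ic$ read through the bijection. Consequently its minimizer is the $\ig$-image of the minimizer on $\ic$: writing $p_{min} = \imin{s\in\ic}I_q(s\|r)$, this says $\imin{p\in\ig\ic}I_q(p\|\ig r) = \ig p_{min}$, which is precisely the claimed identity once the notation of the statement is unpacked.

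The individual manipulations are routine; the only point genuinely requiring care is the passage to the continuous setting, since the divergence was defined as a sum in Section~\ref{sec2}. Making that precise — integrable densities, a diffeomorphism $\ig$ with $|J| \neq 0$, and the absolute-value convention in the Jacobian so that the cancellation is clean — is the main obstacle, and I expect it to be mostly bookkeeping rather than a conceptual difficulty. Finally, uniqueness of the minimizer, guaranteed by the convexity of $I_q$ in $p$ for $q>0$ and the convexity of $\ic$ both noted earlier, makes the argmin statement in \eqref{res1_1} unambiguous.
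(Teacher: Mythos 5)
Your proposal is correct and follows essentially the same route as the paper: both prove the pointwise identity $I_q(\ig p\|\ig r)=I_q(p\|r)$ by cancelling the Jacobian in the ratio inside $\ln_q$ and in the volume element under the change of variables, and then deduce \eqref{res1_1} from the resulting correspondence of minimizers. Your version merely makes explicit what the paper leaves implicit (the bijection $\ic\leftrightarrow\ig\ic$, the regularity of $\ig$, and the uniqueness of the minimizer), which is a tidier writeup of the identical argument.
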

\begin{proof}
 We have $(\Gamma p)(y) = J^{-1}p(x)$, where $J$ is the Jacobian $J =
\partial(y)/\partial(x)$.
\begin{align*}
 I_q(\ig p|| \ig r) &= -\int_{\ig \ix} \ig p(y) \;\ln_q \left(\frac{\ig
r(y)}{\ig
p(y)}\right) \mathrm{d}y\\
 &= -\iintx{J^{-1} p(x) \; \ln_q \left(\frac{J^{-1} r(x)}{J^{-1} p(x}\right)
\;J}\\
 &= -\iintx{p(x)\;\ln_q\frac{r(x)}{p(x)}}\\
 &= I_q(p||r)
\end{align*}
This proves \eqref{res1_2}. From \eqref{res1_2} it also follows  that the
minimum in $\ig \ic$ corresponds
to the minimum in $\ic$, which proves \eqref{res1_1}.
\end{proof}

\begin{thm}[Subset Independence]\label{res2}
 Let $S_1,S_2,\dots,S_n$ be a partition of $\ix$. Let the new information $\ic$
comprise about each of the conditional densities $p(x/x\in s_i),\; i= 1 \dots
n$. Thus, $\ic = \ic_1 \wedge \ic_2 \wedge \dots \wedge \ic_n$, where $\ic_i$
is the constraint set on the conditional densities of $S_i$. Let $\mathcal{M}$
be the new information giving the probability of being in each of the $n$
subsets, which is the constraint
\begin{equation*}
 \sum_{x\in S_i} p(x) = m_i,\quad i=1\dots n\ico
\end{equation*}
where $m_i$ are known values. Then given the prior distribution $r$,
\begin{equation}
 p_{\mathcal{CM}}^{min}(x/x\in S_i) = \imin{p \in \ic_i}
I_q(p||r_i), \quad q\in(0,1)\label{res2_1}\ico
\end{equation}
 and
\begin{align}
 I_q(p_{\mathcal{CM}}^{min}||r) &= \isumn m_i\; I_q(p_i||r_i) - \isumn m_i
\;\ln_q \frac{s_i}{m_i} \nonumber\\
 &+ (1-q) \isumn \left( m_i\; \ln_q \frac{s_i}{m_i}\;
I_q(p_i||r_i)\right)\label{res2_2}
\end{align}
hold, where
\begin{align*}
 p_{\mathcal{CM}}^{min} &= \imin{p\; \in \ic \wedge \mathcal{M}} I_q(p||r)\ico\\
 p_i(x) &= p_{\mathcal{CM}}^{min}(x/x \in S_i)\ico\\
 r_i(x) &= r(x/x \in S_i)\ico
\end{align*}
and $s_i$ are the prior probability of being in each subset, given
by $s_i = \sum_{x \in S_i} r(x)$.
\end{thm}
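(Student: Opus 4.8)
The plan is to exploit the algebraic form of Tsallis divergence that makes the partition structure transparent. First I would use $\ln_q x = \frac{x^{1-q}-1}{1-q}$ to rewrite the divergence in the ``escort'' form
\[
I_q(p||r) = \ifrac\Big[1 - \isumx p(x)^q r(x)^{1-q}\Big]\id
\]
Since $S_1,\dots,S_n$ partition $\ix$ and every feasible $p$ has subset mass $m_i$, on $S_i$ I write $p(x)=m_i\,p_i(x)$ and $r(x)=s_i\,r_i(x)$. Substituting and pulling out the subset-mass factors gives
\[
\isumx p(x)^q r(x)^{1-q} = \isumn m_i^{\,q} s_i^{\,1-q}\sum_{x\in S_i} p_i(x)^q r_i(x)^{1-q}\ico
\]
where the inner sum equals $1-(1-q)I_q(p_i||r_i)$ by the same escort identity applied to the conditionals.

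Next I would convert back to the $q$-logarithm through $(s_i/m_i)^{1-q}=1+(1-q)\ln_q\frac{s_i}{m_i}$, which yields $m_i^{\,q}s_i^{\,1-q}=m_i+(1-q)\,m_i\ln_q\frac{s_i}{m_i}$. Feeding this into the expression for $I_q(p||r)$ and collecting terms, the constant piece $\ifrac\big[1-\isumn m_i^{\,q} s_i^{\,1-q}\big]$ collapses to $-\isumn m_i\ln_q\frac{s_i}{m_i}$ precisely because $\isumn m_i=1$ annihilates the $1-\isumn m_i$ remainder. The surviving terms reproduce exactly the identity \eqref{res2_2}, now established for \emph{every} feasible $p$ (every $p$ with the prescribed subset masses), not merely the minimizer.

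Finally I would read off \eqref{res2_1} from this identity. Its right-hand side is affine in the conditional divergences $I_q(p_i||r_i)$, and the coefficient of each is $m_i+(1-q)m_i\ln_q\frac{s_i}{m_i}=m_i^{\,q}s_i^{\,1-q}$, which is strictly positive for $q\in(0,1)$ with $m_i,s_i>0$; the remaining term depends only on the fixed masses. Because the constraint set factors as $\ic=\ic_1\wedge\dots\wedge\ic_n$ with $\ic_i$ constraining only $p_i$, the feasible region is a product set and the objective is a positively weighted sum of terms each depending on a single conditional. Hence the minimization decouples, and the minimizing conditional on $S_i$ is independently $\imin{p\in\ic_i}I_q(p||r_i)$, giving \eqref{res2_1}; substituting these minimizers back into the identity recovers \eqref{res2_2} for $p^{min}_{\mathcal{CM}}$.

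The main obstacle is bookkeeping rather than conceptual: carrying the $m_i^{\,q}s_i^{\,1-q}$ factors cleanly through the power/$\ln_q$ conversions and confirming that the stray constant cancels via $\isumn m_i=1$. The hypothesis $q\in(0,1)$ also warrants care, since it is what makes the decoupling coefficients $m_i^{\,q}s_i^{\,1-q}$ positive and, through convexity of $I_q(\cdot||r_i)$, guarantees that each conditional subproblem has a well-defined minimizer, so that the product of these minimizers is genuinely optimal.
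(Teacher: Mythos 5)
Your proof is correct and is essentially the paper's argument in different clothing: your escort-form factorization $\isumx p(x)^q r(x)^{1-q} = \isumn m_i^{\,q} s_i^{\,1-q}\sum_{x\in S_i} p_i(x)^q r_i(x)^{1-q}$ is exactly the $q$-logarithm product rule $\ln_q(xy)=\ln_qx+\ln_qy+(1-q)\ln_qx\,\ln_qy$ (equivalently $(xy)^{1-q}=x^{1-q}y^{1-q}$) that the paper applies directly to $\ln_q\frac{s_i r_i(x)}{m_i p_i(x)}$ after writing $p(x)=m_i p_i(x)$, $r(x)=s_i r_i(x)$, and both proofs then obtain \eqref{res2_2} and decouple the minimization over the product constraint set $\ic_1\wedge\dots\wedge\ic_n$. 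Your write-up is marginally more careful on two points the paper glosses over --- you establish the identity for \emph{every} feasible $p$, not just the minimizer, which is what the argmin argument actually requires, and you make explicit that the coefficient of each $I_q(p_i||r_i)$ is $m_i^{\,q}s_i^{\,1-q}>0$ (rather than only noting that $m_i\ln_q\frac{s_i}{m_i}$ is constant) --- but these are refinements of the same route, not a different one.
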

\begin{proof}
 \begin{equation*}
  I_q(p_{\mathcal{CM}}^{min}||r) = - \isumn \sum_{x\in S_i} m_i p_i(x)
\;\ln_q \frac{s_ir_i(x)}{m_i
r_i(x)}\id
 \end{equation*}
  Using the relation $\ln_q(xy) = \ln_qx + \ln_qy+(1-q)\ln_qx\;\ln_qy$, we
get
 \begin{align*}
  I_q(p_{\mathcal{CM}}^{min}||r) &= - \isumn \sum_{x\in S_i} m_i p_i(x) 
  \left(
  \iln{s_i}{m_i} + \iln{r_i(x)}{p_i(x)} \right.\\
  & \qquad \left.+ (1-q)\;\iln{s_i}{m_i}\;\iln{r_i(x)}{p_i(x)}
  \right)\\
    &= \isumn m_i I_q(p_i||r_i) - \isumn m_i\;\iln{s_i}{m_i}\\
  &\qquad + (1-q)\isumn \left( m_i\;\iln{s_i}{m_i} I_q(p_i||r_i)\right)\ico
 \end{align*}
 this proves \eqref{res2_2}. To prove \eqref{res2_1} it may be noted that each
of the terms
$m_i\;\iln{s_i}{m_i}$ is a constant. Hence minimizing rhs of \eqref{res2_2} is
independent of the values taken by it. i.e for $q\in (0,1)$ minimizing
$I_q(p_{\mathcal{CM}}^{min}||r)$ is equivalent to minimizing each of the terms,
$I_q(p_i||r_i)$.
\end{proof}
Let us further analyze equation \eqref{res2_1} and try to interpret it. What
this means is that, given a system which naturally partitions into subsets, we
can find the posterior densities in two different ways
\begin{enumerate}
 \item We can find the posterior $p_{\mathcal{CM}}^{min}$ and condition it on
the different subsets $S_i$ or
 \item We can condition the prior $r$ on the different subsets $S_i$ and use
that as a prior to minimize in the constraint set $\ic_i$
\end{enumerate}
By \eqref{res2_1} both these approaches should give the same result.
\section{Tsallis Divergence Minimization - Classical}\label{sec4}
The task of
minimization can be defined as follows: Minimize $I_q(p||r)$ subject to the
constraints
\begin{align}
\isumx{p(x)} &= 1\label{pxsum}\ico\\
p(x) &\ge 0\nonumber\ico\\
\isumx{u_m(x) p(x)} &= \iexpe{u_m}, \quad m = 1,\dots,M\nonumber\id
\end{align}
By choosing the Lagrangian for the minimization problem as
\begin{eqnarray*}
 \lefteqn{\mathcal{L} =}\\
 &\isumx p(x)\frac{\left[\frac{p(x)}{r(x)}\right]^{q-1}-1}{q-1}
 - \ib{\frac{q\lambda-1}{q-1}} \bigl( \isumx{p(x)} - 1\bigr) \\
 &- \isumm q\lambda\beta_m(\isumx{u_m(x) p(x)} - \iexpe{u_m})\id
\end{eqnarray*}
The distribution that we get after minimization is
\begin{equation}\label{px1}
 p(x) = r(x) \left[\lambda\Bigl(1+(q-1)\isumm \beta_m
u_m(x)\Bigr)\right]^{\frac{1}{q-1}}\id
\end{equation}
Substituting \eqref{px1} in \eqref{pxsum} we get
\begin{eqnarray*}
\lambda^{\frac{1}{q-1}} = \frac{1}{
    \isumx \left[
	r(x) 
	\Bigl(
	1-(1-q)\isumm\beta_m u_m(x)
	\Bigr) ^ {\frac{1}{q-1}}
    \right]
} \id
\end{eqnarray*}
Substituting in \eqref{px1} we get
\begin{eqnarray}
p(x) = \frac{r(x) \Bigl(1-(1-q)
\isumm\beta_mu_m(x)\Bigr)^{\frac{1}{q-1}}}{\iz}\label{px1_1}\ico
\end{eqnarray}
where
\begin{eqnarray}
\iz = \isumx \left[
	r(x) 
	\Bigl(
	1-(1-q)\isumm\beta_m u_m(x)
	\Bigr) ^ {\frac{1}{q-1}}
    \right]\nonumber\id
\end{eqnarray}
equation \eqref{px1_1} can be rewritten as
\begin{align}
p(x) = \frac{r(x)}{\iz
\;\;\exp_q\Big(-\isumm\beta_mu_m(x)\Big)}\label{px1_2}\id
\end{align}
Where  Where $\exp_q$ is exponentiation in $q$-deformed algebra
\cite{Borges:2004:ApossibleDeformedAlgebra}, and is defined as,
\begin{equation*}
\exp_q(x) = \left\{
  \begin{array}{l l}
    [1+(1-q)x]^\frac{1}{1-q} & \quad \text{if } 1+(1-q)x \ge 0\\
    0 & \quad \text{otherwise}\id
  \end{array} \right.
\end{equation*}
using the relation $\frac{1}{\exp_q(x)} =
\exp_q\left(\frac{-x}{1+(1-q)x}\right)$, we get
\begin{equation}
 p(x) = \frac{r(x)}{\iz}
\exp_q\left(\frac{\isumm\beta_mu_m(x)}{1-(1-q)\isumm\beta_mu_m(x)}\right)\id
\end{equation}
Note that we need an extra condition known as \textit{Tsallis cut-off condition}
to prevent negative values for $p(x)$. We have assumed this condition to be
implicit.
\section{Shore and Johnson Properties involving maximum
entropy}\label{snjmini}
In this section we shall discuss properties which depend on the formalism
employed.
\begin{thm}[Uniqueness]
 For $q>0$ given a prior, the posterior probability distribution is unique.
\end{thm}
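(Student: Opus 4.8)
The plan is to obtain uniqueness from \emph{strict} convexity of the map $p \mapsto I_q(p||r)$ over the convex feasible region $\ic$, for a fixed prior $r$ and $q>0$. Mere convexity in both arguments has already been noted; the whole proof hinges on sharpening this to strict convexity in $p$ and then invoking the standard midpoint argument on the convex set $\ic$.

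First I would expand the objective. Using $\ln_q x = \frac{x^{1-q}-1}{1-q}$ one gets
\[
I_q(p||r) = \isumx \frac{p(x) - r(x)^{1-q}\,p(x)^q}{1-q}\ico
\]
which is \emph{separable} as a sum of one-variable functions $h_x(t) = \frac{t - r(x)^{1-q}t^q}{1-q}$ of the individual coordinates $p(x)$. Differentiating twice yields $h_x''(t) = q\,r(x)^{1-q}\,t^{q-2}$, strictly positive for all $q>0$ and $t>0$; equivalently, $t^q$ is strictly concave for $0<q<1$ and strictly convex for $q>1$, and the sign of $\frac{1}{1-q}$ compensates in each case, so every $h_x$ is strictly convex on $[0,\infty)$ (the case $q=1$ being the classical KL divergence, already strictly convex). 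Since a sum of coordinatewise strictly convex functions is strictly convex, $I_q(\cdot||r)$ is strictly convex on the whole simplex.

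Granting this, uniqueness is immediate. Suppose $p_1 \ne p_2$ were both minimizers in $\ic$, each attaining the minimal value $I^\ast$. Because $\ic$ is convex, the midpoint $\bar p = \tfrac12(p_1+p_2)$ lies in $\ic$, and strict convexity forces
\[
I_q(\bar p\,||\,r) < \tfrac12 I_q(p_1||r) + \tfrac12 I_q(p_2||r) = I^\ast\ico
\]
contradicting minimality. Existence of at least one minimizer I would secure from continuity of $I_q(\cdot||r)$ together with compactness of the feasible set, which is the probability simplex intersected with the linear constraints \eqref{const1}--\eqref{const3} and is therefore closed and bounded.

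The main obstacle is simply getting the strict convexity cleanly rather than any deep difficulty: one must differentiate the \emph{actual} minimized objective $-\isumx p(x)\ln_q\frac{r(x)}{p(x)}$, since a careless appeal to the generic generator $f(t)=t\ln_q t$ would instead produce the Hessian factor $(2-q)t^{-q}$ and only the weaker range $q<2$. Keeping careful track of the sign of $1-q$ is what delivers strict convexity for the full range $q>0$, after which uniqueness follows mechanically from the convexity of $\ic$ and the existence argument above.
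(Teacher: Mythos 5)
Your proof is correct, and it in fact supplies a step the paper's own proof omits. The paper argues along exactly your skeleton --- convexity of $I_q(\cdot||r)$ for $q>0$ together with convexity of the constraint set $\mathcal{C}$ --- but it invokes only plain convexity, which by itself guarantees merely that the set of minimizers is convex, not that it is a singleton; as stated, the paper's one-line proof has a gap. Your sharpening to \emph{strict} convexity, via the separable form $I_q(p||r)=\sum_{x}\frac{p(x)-r(x)^{1-q}p(x)^q}{1-q}$ and the coordinatewise second derivative $h_x''(t)=q\,r(x)^{1-q}t^{q-2}>0$ for $t>0$, is precisely the missing lemma, after which the midpoint argument closes the proof (and a separable sum of strictly convex coordinate functions is indeed strictly convex, since two distinct feasible points must differ in at least one coordinate). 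Your side remark is also well taken: the paper pairs its $I_q$ with the generator $f(t)=t\ln_q t$, whose second derivative $(2-q)t^{-q}$ would certify strict convexity only for $q<2$; the generator actually matching $I_q(p||r)=-\sum_x p(x)\ln_q\bigl(r(x)/p(x)\bigr)$ is $f(t)=(t^q-t)/(q-1)=t\ln_{2-q}t$, with $f''(t)=q\,t^{q-2}$, in agreement with your direct computation, so the full range $q>0$ survives. What the paper's version buys is brevity; what yours buys is an argument that is actually valid, plus clarity about which convention of the generator controls the range of $q$. Two minor caveats: your compactness-based existence step requires $\mathcal{X}$ finite (consistent with the paper's summation convention, but worth stating explicitly, since the proposition itself only asserts uniqueness), and the Hessian computation presumes $r(x)>0$ on the relevant support, which the divergence implicitly requires anyway; strict convexity is established in the first argument only, but that is all the uniqueness claim needs.
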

\begin{proof}
 For $q>0$ Tsallis divergence is a convex function, for both its parameter.
Since the constraint set $\ic$ is a convex set, the minimization is always
unique.
\end{proof}
\begin{thm}[Reflexiveness]\label{res4}
 For $q > 0$, given a prior $r$ and constraint set $\ic$, the posterior
obtained by minimizing the Tsallis divergence is same as $r$ if and
only if $r \in \ic$
\end{thm}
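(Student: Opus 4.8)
The plan is to treat the biconditional as two separate implications, with essentially all the work concentrated in the ``if'' direction. The ``only if'' direction is immediate: by construction the minimiser lies in the feasible set, $p_{min}=\imin{p\in\ic}I_q(p||r)\in\ic$, so if $p_{min}=r$ then $r\in\ic$ with no further argument.

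For the ``if'' direction, suppose $r\in\ic$. The key ingredient I would use is that for $q>0$ Tsallis divergence is not merely non-negative (as recorded in the preliminaries) but vanishes \emph{exactly} when its arguments coincide, i.e. $I_q(p||r)=0$ if and only if $p=r$. A short computation from the definition, using $\isumx r(x)=1$, rewrites the divergence in Csisz\'ar form $I_q(p||r)=\isumx r(x)\,g\!\left(p(x)/r(x)\right)$ with generator $g(t)=\frac{t^q-1}{q-1}$; since $g(1)=0$ and $g''(t)=q\,t^{q-2}>0$ for every $q>0$, the generator is strictly convex on $(0,\infty)$, and Jensen's inequality gives $\isumx r(x)\,g(p(x)/r(x))\ge g\!\left(\isumx p(x)\right)=g(1)=0$, where strict convexity forces the equality case to have $p(x)/r(x)$ constant in $x$, hence $p=r$ after normalisation. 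With this in hand the conclusion is quick: since $r\in\ic$ the prior is itself feasible, and $I_q(r||r)=0$ attains the global minimum of $I_q(\cdot||r)$ over all densities, a fortiori over $\ic$; by the Uniqueness proposition the minimiser over the convex set $\ic$ is unique, so $p_{min}=r$.

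I expect the only genuine obstacle to be the equality clause of the Jensen step --- upgrading $I_q\ge0$ to $I_q=0\Rightarrow p=r$ --- which is exactly where \emph{strict} rather than mere convexity of the generator is required. The $q=1$ case corresponds to the classical KL limit, where the generator degenerates to $t\ln t$ and the statement reduces to Shore and Johnson's original; the remaining steps, in particular the appeal to the previously established Uniqueness result, are routine.
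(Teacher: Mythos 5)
Your proof is correct and takes essentially the same approach as the paper, whose entire argument is to cite the two facts $I_q(p||r)=0$ iff $p=r$ and $I_q(p||r)>0$ for $q>0$; you simply supply the proof of these facts (via the Csisz\'ar form of the divergence, strict convexity of the generator, and the equality case of Jensen) that the paper takes as known. Your closing appeal to the Uniqueness proposition is harmless but dispensable, since strict positivity of $I_q(p||r)$ for $p \neq r$ already makes $r$ the unique minimiser over any feasible set containing it.
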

\begin{proof}
 This property follows directly from the following facts $I_q(p||r) = 0
\quad\text{iff}\quad  p = r$ and $I_q(p||r) > 0 \quad\text{for}\quad q > 0$.
\end{proof}
\begin{thm}[Idempotence]
 Given a prior $r$ and constraint set $\ic$, let $p$ be the
posterior obtained, then $\imin{u \in \ic} I_q(u||p) = p$,
i.e., taking the same information into account twice has the same effect as
taking it into account once.
\end{thm}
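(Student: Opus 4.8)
The plan is to reduce idempotence directly to the Reflexiveness property (Proposition \ref{res4}). The crucial first observation is that the posterior $p = \imin{u \in \ic} I_q(u||r)$, being a minimizer \emph{over} the constraint set $\ic$, necessarily satisfies every constraint in $\ic$, and hence $p \in \ic$. This membership is the only fact about $p$ that the argument actually needs; its explicit closed form \eqref{px1_2} never enters. Having pinned down $p \in \ic$, I would then re-run the minimization using $p$ itself as the new prior.

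By Reflexiveness, for $q > 0$ the posterior obtained by minimizing $I_q(\cdot\,||p)$ over $\ic$ coincides with the prior $p$ precisely when $p \in \ic$. Since we have just argued $p \in \ic$, the hypothesis of Reflexiveness is satisfied, and we conclude $\imin{u \in \ic} I_q(u||p) = p$, which is exactly the claimed idempotence. If one prefers a self-contained route rather than invoking the earlier proposition, the identical conclusion follows from the two facts on which Reflexiveness itself rests: for $q > 0$ one has $I_q(u||p) \ge 0$ with equality iff $u = p$. Because $p \in \ic$, the lower bound $0$ is attained within $\ic$ at $u = p$, so $p$ is a global minimizer over $\ic$; the Uniqueness property (convexity of $I_q$ in its first argument together with convexity of $\ic$) then upgrades this to the \emph{unique} minimizer.

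I do not anticipate a genuine obstacle here. The single point requiring care is the verification that the already-computed posterior lies in its own constraint set, $p \in \ic$, which is immediate from the definition of constrained minimization; everything thereafter is a direct appeal to Reflexiveness (or, equivalently, to nonnegativity of $I_q$ plus Uniqueness), both of which have been established above for the regime $q > 0$.
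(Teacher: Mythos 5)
Your proposal is correct and follows essentially the same route as the paper: the paper's proof likewise notes that $p \in \ic$ and then invokes the Reflexiveness proposition with $p$ as the new prior to conclude the minimizer is $p$. Your additional self-contained variant (nonnegativity of $I_q$ for $q>0$ plus Uniqueness) is just an unpacking of the same argument, so there is no substantive difference.
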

\begin{proof}
 This is a simple corollary of proposition \ref{res4},  since $p \in \ic$ the
posterior obtained by taking $p$ as prior and $\ic$ as constraint, will also be
$p$.
\end{proof}
\begin{thm}[Invariance]\label{res7}
 Given a prior $r$ consider the constraint sets $\ic_1$ and $\ic_2$,
let $p = \imin{u \in \ic_1} I_q(u||r)$, then following relations hold
\begin{align}
 p &= \imin{u\in \ic_1 \wedge \ic_2}I_q(u||r)\label{res7_1}\\
 &= \imin{u \in \ic_1 \wedge \ic_2} I_q(u||p)\label{res7_2}\\
 &= \imin{u \in \ic_2} I_q(u||p)\label{res7_3}\id
\end{align}

\end{thm}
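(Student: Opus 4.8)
The plan is to reduce all three equalities to two facts already available: Proposition~\ref{res4} (Reflexiveness), and the elementary observation that if a function attains its minimum over a set $A$ at a point that happens to lie in a subset $B\subseteq A$, then that same point is the minimizer over $B$. The one hypothesis the argument genuinely needs is that the additional information $\ic_2$ is \emph{consistent} with $p$, i.e. $p\in\ic_2$; since $p\in\ic_1$ holds by the definition $p=\imin{u\in\ic_1}I_q(u||r)$, this gives $p\in\ic_1\wedge\ic_2$, and everything else follows. I would flag this consistency condition explicitly at the top of the proof, because without it the equalities with $p$ on the left-hand side cannot hold.

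First I would establish \eqref{res7_1}. Since $\ic_1\wedge\ic_2\subseteq\ic_1$ and $p$ minimizes $I_q(\cdot||r)$ over the larger set $\ic_1$, and since $p$ itself lies in $\ic_1\wedge\ic_2$, the value $I_q(p||r)$ is already the infimum over $\ic_1$ and hence a fortiori the infimum over the smaller feasible set; by the Uniqueness proposition (convexity of $I_q$ for $q>0$ on the convex set $\ic_1\wedge\ic_2$) this minimizer is unique, so $\imin{u\in\ic_1\wedge\ic_2}I_q(u||r)=p$.

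Next I would handle \eqref{res7_2} and \eqref{res7_3} in one stroke using Reflexiveness. For any constraint set containing $p$, Proposition~\ref{res4} says that minimizing $I_q(\cdot||p)$ over that set returns the prior $p$, because $I_q(p||p)=0$ is the global minimum (recall $I_q\ge 0$ for $q>0$, with equality iff the arguments coincide) and it is attained at a feasible point. Applying this with the constraint set $\ic_1\wedge\ic_2$ gives \eqref{res7_2}, and applying it with $\ic_2$ gives \eqref{res7_3}. Chaining the three identities then yields the full display.

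The hard part is not computational at all --- it is conceptual: pinning down the consistency hypothesis $p\in\ic_2$ and arguing that \eqref{res7_2} and \eqref{res7_3} are legitimate instances of Reflexiveness rather than circular restatements of it. Once that is in place, the proof is a short chain of the subset-minimization remark and Proposition~\ref{res4}, with convexity and uniqueness (for $q>0$) guaranteeing that each $\arg\min$ is a single well-defined distribution, so the equalities are between genuine points rather than between sets.
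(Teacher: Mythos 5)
Your proof is correct and takes essentially the same route as the paper's: Reflexiveness (Proposition~\ref{res4}) applied to the feasible point $p$ yields \eqref{res7_2} and \eqref{res7_3}, and the observation that the minimizer over $\ic_1$ lying in $\ic_1 \wedge \ic_2$ makes it the (unique, by convexity) minimizer over the smaller set yields \eqref{res7_1}. Your explicit flagging of the consistency hypothesis $p \in \ic_2$ is a genuine improvement in rigor --- the paper's proof opens with ``$p \in \ic_1$ and $p \in \ic_2$'' without stating this assumption in the proposition (only the remark following the proof acknowledges it), and its final sentence cites \eqref{res7_3} where \eqref{res7_1} is meant --- but the underlying argument is identical.
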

\begin{proof}
 $p \in \ic_1$ and $p \in \ic_2$ hence $p \in \ic_1 \wedge \ic_2$ so from
proposition \ref{res4} both, \eqref{res7_2} and \eqref{res7_3} follow. We know
that $p = \imin{u \in \ic_1} I_q(u||r)$ and $p \in \ic_1 \wedge
\ic_2$ from the above two, \eqref{res7_3} follows.
\end{proof}
The result shows that if the posterior obtained from $\ic_1$ is an element of
$\ic_2$ then applying $\ic_2$ on the posterior in different ways does not result
in any change.
\begin{thm}[Weak Subset Independence]
 Let $S_1, S_2, \dots, S_n$ be a partition of $\ix$. Let the new information
$\ic$
comprise about each of the conditional densities $p(x/x\in s_i),\; i= 1 \dots
n$. Thus, $\ic = \ic_1 \wedge \ic_2 \wedge \dots \wedge \ic_n$, where $\ic_i$
is the constraint set on the conditional densities of $S_i$.Then given the prior
distribution $r$
\begin{align}
 p_{\ic}^{min}(x/x\in S_i) = \imin{p \in \ic_i}
I_q(p||r_i), \quad q\in(0,1)\label{res3_1}\ico
\end{align}
 and
\begin{align}
 I_q(p_{\ic}^{min}||r) &= \isumn u_i\; I_q(p_i||r_i) - \isumn u_i
\;\ln_q \frac{s_i}{u_i} \nonumber\\
 &\quad+ (1-q) \isumn \left( u_i\; \ln_q \frac{s_i}{u_i}\;
I_q(p_i||r_i)\right)\label{res3_2}\ico
\end{align}
hold where
\begin{align*}
 p_{\ic}^{min} &= \imin{p\; \in \ic} I_q(p||r)\ico\\
 p_i(x) &= p_{\ic}^{min}(x/x \in S_i)\ico\\
 r_i(x) &= r(x/x \in S_i)\id
\end{align*}
$s_i$ are the prior probability of being in each subset,
given by $s_i = \sum_{x \in S_i} r(x)$, and $u_i$ are the posterior probability
of being in each subset, given by $u_i = \sum_{x \in S_i} p_\ic^{min}(x)$.
\end{thm}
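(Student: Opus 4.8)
The plan is to obtain this result as a direct consequence of the Subset Independence theorem (Proposition~\ref{res2}), the only difference being that here the subset masses are not prescribed by an auxiliary constraint $\mathcal{M}$ but are instead the masses $u_i$ realized by the unconstrained minimizer. First I would establish \eqref{res3_2} by the same computation used in Proposition~\ref{res2}: write $I_q(p_{\ic}^{min}||r) = -\isumn \sum_{x \in S_i} p(x)\,\iln{r(x)}{p(x)}$, and for $x \in S_i$ use the factorizations $p(x) = u_i\,p_i(x)$ and $r(x) = s_i\,r_i(x)$, so that $r(x)/p(x) = (s_i/u_i)\,(r_i(x)/p_i(x))$. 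Expanding with the $q$-product rule $\ln_q(ab) = \ln_q a + \ln_q b + (1-q)\ln_q a\,\ln_q b$ and collecting terms yields \eqref{res3_2} verbatim, with $u_i$ in the role played by $m_i$ in \eqref{res2_2}.

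The genuinely new point is \eqref{res3_1}, since $u_i$ is determined by the minimizer rather than imposed. I would handle this by a restriction argument. Let $p_{\ic}^{min}$ be the minimizer over $\ic$ (unique, for $q>0$), and let $\mathcal{M}_u$ denote the mass constraint $\sum_{x\in S_i} p(x) = u_i$ fixing each subset mass to its realized value. Then $p_{\ic}^{min} \in \ic \wedge \mathcal{M}_u \subseteq \ic$, and since $p_{\ic}^{min}$ already attains the infimum over the larger set $\ic$, it is a fortiori a minimizer over the smaller convex set $\ic \wedge \mathcal{M}_u$; uniqueness then gives $p_{\ic}^{min} = \imin{p \in \ic \wedge \mathcal{M}_u} I_q(p||r)$. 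Applying Proposition~\ref{res2} with the prescribed masses taken to be $m_i = u_i$ now yields \eqref{res2_1}, which is exactly \eqref{res3_1}.

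Alternatively, \eqref{res3_1} can be read off directly from \eqref{res3_2}: the coefficient multiplying $I_q(p_i||r_i)$ there is $u_i\big(1 + (1-q)\iln{s_i}{u_i}\big) = u_i^{q} s_i^{1-q}$, which is strictly positive for $q \in (0,1)$ and $s_i,u_i > 0$. Since the remaining terms do not depend on the conditional densities once the masses $u_i$ are fixed, minimizing \eqref{res3_2} decouples into the independent minimization of each $I_q(p_i||r_i)$ over $\ic_i$. An exchange argument makes this rigorous: replacing the conditional $p_i$ of the minimizer by any $\tilde p_i \in \ic_i$ of the same mass keeps the candidate in $\ic$, so optimality forces each $p_i$ to minimize $I_q(\cdot||r_i)$.

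I expect the main obstacle to be precisely that the $u_i$ are not free parameters but are fixed by the minimizer; unlike in Proposition~\ref{res2}, one cannot simply declare the terms $u_i\,\iln{s_i}{u_i}$ to be constants. The restriction-to-$\mathcal{M}_u$ step, justified by convexity and uniqueness of the minimizer for $q>0$, is what legitimately converts the problem into the already-solved prescribed-mass case. The role of the hypothesis $q\in(0,1)$ is to guarantee positivity of the coefficient $u_i^{q}s_i^{1-q}$, which is what allows the minimization to decouple across the subsets.
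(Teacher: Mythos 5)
Your proposal is correct and follows essentially the same route as the paper: the paper likewise introduces the mass constraint $\mathcal{R}:\ \sum_{x\in S_i} p(x) = u_i$ at the realized values, observes (citing its Invariance proposition, Proposition~\ref{res7}) that $\imin{p\in\ic} I_q(p||r) = \imin{p\in\ic\wedge\mathcal{R}} I_q(p||r)$, and then invokes Subset Independence (Proposition~\ref{res2}) with $m_i = u_i$ to obtain both \eqref{res3_1} and \eqref{res3_2}. Your a fortiori/uniqueness justification of the restriction step is just an unpacked version of the paper's appeal to Proposition~\ref{res7}, and your observation that the coefficient of $I_q(p_i||r_i)$ equals $u_i^q s_i^{1-q} > 0$ is a nice explicit account of why $q\in(0,1)$ suffices, which the paper leaves implicit in its reference to Proposition~\ref{res2}.
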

\begin{proof}
 Let $\mathcal{R}$ be the information defined by the constraint $\sum_{x \in
S_i} p(x) = u_i$, then it follows from proposition \ref{res7} that
 \begin{equation*}
  \imin{p\; \in \ic} I_q(p||r) = \imin{p\; \in \ic\wedge\mathcal{R}}
I_q(p||r)\id
 \end{equation*}
 Now we can apply proposition \ref{res2} to get \eqref{res3_1} and
\eqref{res3_2}.

\end{proof}
This result is same as proposition \ref{res2} and has the same interpretation.
This difference here lies in the fact that we do not have a prior information
$\mathcal{M}$ regarding the total probability in each subset.
\begin{thm}[Subset Aggregation]
 Let $S_1,S_2,\dots,S_n$ be a partition of $\ix$. Let $\ig$ be a transformation
which converts a given distribution $p$ to discrete distribution over $S_i$,
the transformation is defined by
 \begin{equation*}
  p'(x_i) = \ig p = \int_{S_i} p(x) \mathrm{d}x\ico
 \end{equation*}
 where $x_i$ is  a discrete state corresponding to $x \in S_i$. Let $\ic$' be
the new information about the distribution $\ig p$. Then for a given prior
$r$, then
 \begin{align}
  r (x/x\in S_i) &= p_{min} (x/x\in S_i)\label{res8_1}\ico\\
  \ig p_{min} &= \ig (p_{min})\label{res8_2}\ico\\
  \text {and }
  I_q(\ig p_{min}||\ig r) &= I_q (p_{min}||r)\label{res8_3}\ico
 \end{align}
 where $p_{min} = \imin{p \in \ig^{-1}(\ic')} I_q (p||r)$.
\end{thm}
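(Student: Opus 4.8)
The plan is to recognize this proposition as the special case of Subset Independence (Proposition \ref{res2}) in which the constraint sets $\ic_i$ place no restriction on the conditional densities, while the subset masses $m_i = \int_{S_i} p(x)\,\mathrm{d}x$ are no longer pinned down by a prior $\mathcal{M}$ but are instead constrained only through $\ic'$. Indeed, the feasible set $\ig^{-1}(\ic')$ restricts solely the aggregated quantities $\ig p$, leaving the shape of $p$ within each $S_i$ entirely free. First I would write the decomposition \eqref{res2_2} for $I_q(p||r)$, with $m_i$ the (as yet undetermined) subset masses of a feasible $p$, so that the objective splits into a part depending only on the conditional densities $p_i = p(x/x\in S_i)$ and a part depending only on the masses $m_i$.

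Next I would carry out the inner minimization over the conditional shapes. Collecting in \eqref{res2_2} the terms multiplying $I_q(p_i||r_i)$ gives the coefficient $m_i\,\ib{1+(1-q)\iln{s_i}{m_i}}$, which by the $q$-logarithm identity $1+(1-q)\ln_q x = x^{1-q}$ equals $m_i\,(s_i/m_i)^{1-q}>0$. Hence, since $I_q(p_i||r_i)\ge 0$ for $q>0$ with equality exactly when $p_i=r_i$, each conditional divergence is independently driven to its minimum at $p_i=r_i$ regardless of the masses $m_i$. This yields $p_{min}(x/x\in S_i)=r(x/x\in S_i)$, which is \eqref{res8_1}; equivalently it is \eqref{res2_1} specialized to the case where the $\ic_i$ are vacuous.

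Substituting $p_i=r_i$, so that every $I_q(p_i||r_i)=0$, collapses \eqref{res2_2} to $I_q(p_{min}||r) = -\isumn m_i\,\iln{s_i}{m_i}$. By the definition of Tsallis divergence this right-hand side is exactly $I_q(\ig p_{min}||\ig r)$, the divergence between the aggregated posterior $(\ig p)(x_i)=m_i$ and the aggregated prior $(\ig r)(x_i)=s_i$; this establishes \eqref{res8_3}. What remains of the problem is the outer minimization of choosing the masses $(m_1,\dots,m_n)$ subject to $\ig p\in\ic'$, and by the previous expression this is literally the minimization of $I_q(\ig p||\ig r)$ over $\ig p\in\ic'$. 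Its solution is by definition $\ig p_{min}$, so the aggregate of the detailed minimizer coincides with the minimizer computed directly in the aggregated space, i.e. $\ig(p_{min})=\ig p_{min}$, which is \eqref{res8_2}.

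The main obstacle is justifying the separation of the joint minimization over $\ig^{-1}(\ic')$ into an inner minimization over conditional shapes and an outer minimization over subset masses. This is precisely where the identity $1+(1-q)\ln_q x = x^{1-q}$ and the positivity of the resulting coefficient are essential, and where the restriction $q\in(0,1)$ inherited from Proposition \ref{res2} is needed both to guarantee $I_q(p_i||r_i)\ge 0$ and to keep the cross term from reversing the direction of the inner minimization. A secondary point to watch is the reading of \eqref{res8_2}: here $\ig p_{min}$ must be interpreted as the minimizer of $I_q(\,\cdot\,||\ig r)$ over $\ic'$ in the aggregated space, so that the asserted equality with $\ig(p_{min})$ is a genuine commutation of aggregation with minimization rather than a tautology.
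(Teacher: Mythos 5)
Your proof is correct, but it takes a genuinely different route from the paper's. The paper proves \eqref{res8_1} structurally: it rewrites the aggregated constraints $\ic'$ as expectation constraints on $p$ with functions $w_m$ that are constant on each subset $S_i$, then invokes the explicit Lagrangian solution \eqref{px1_2} to conclude $p_{min}(x) = K_i\, r(x)$ within each $S_i$, so the conditionals of $p_{min}$ and $r$ agree; only afterwards does it substitute $I_q(p_i||r_i)=0$ into the weak-subset-independence decomposition \eqref{res3_2} to obtain \eqref{res8_2} and \eqref{res8_3}. You instead argue variationally from the decomposition alone: collecting the coefficient $m_i\ib{1+(1-q)\iln{s_i}{m_i}} = m_i^{q}s_i^{1-q} > 0$ of each $I_q(p_i||r_i)$, and observing that $\ig^{-1}(\ic')$ constrains only the masses and leaves the conditional shapes free, you split the problem into an inner minimization (solved by $p_i = r_i$) and an outer minimization over masses, which you identify as exactly the aggregated problem of minimizing $I_q(\ig p||\ig r)$ over $\ig p \in \ic'$. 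Your approach buys generality and rigor where the paper is loose: it never uses the specific expectation form of $\ic'$, the Lagrangian machinery, or the Tsallis cut-off condition, and it makes \eqref{res8_2} a transparent commutation of aggregation with minimization, which the paper essentially asserts after proving \eqref{res8_3}; the paper's approach buys an explicit closed form $p_{min}(x)=K_i\,r(x)$ along the way. Two small remarks: to conclude the equality in \eqref{res8_2} from the fact that both sides minimize the aggregated problem, you should invoke uniqueness of the minimizer (the paper's Uniqueness proposition, from convexity for $q>0$); and your worry about the cross term reversing the inner minimization is moot, since by your own identity the coefficient $m_i^{q}s_i^{1-q}$ is positive for every $q$, so the effective restriction is only $q>0$, needed for nonnegativity of $I_q$ with equality exactly at $p_i=r_i$.
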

\begin{proof}
 The constraint set $\ic'$ is defined by a set of expectations 
 \begin{equation*}
  \isumn p'(x_i)u_m(x_i) = \iexpe{u_m} \quad m=1\dots M\id
 \end{equation*}
 In terms of $p = \ig p'$ the constraint set can be represented as
 \begin{equation*}
  \int_{\ix} p(x)w_m(x) = \iexpe{u_m} \quad m=1\dots M\ico
 \end{equation*}
 where $w_m$ is defined as
 \begin{equation*}
  w_m(x) = u_m(x_i), \quad\text{for}\quad x\in S_i,\quad i=1\dots n\ico
 \end{equation*}
 i.e., $w_m$ is constant in each of the subsets $S_i$.\\
 From \eqref{px1_2} we get
 \begin{equation}\label{xyz1}
  p_{min}(x)=\frac{r(x)}{\iz \;\;\exp_q\Big(-\isumm\beta_m w_m(x)\Big)}\id
 \end{equation}
Since $w_m$ is a constant within each subset $S_i$ and $iz$ is a
constant in
itself. So equation \eqref{xyz1} reduces to:
 \begin{align*}
  p_{min}(x) &= K_i\;r(x)\ico
 \end{align*}
 where $K_i$ is a constant for each subset. Now we have
 \begin{align*}
  r (x/x\in S_i) &= r(x)\left/ \int_{y \in S_I} r(y)\right.\\
  &= p_{min} (x/x\in S_i)\id
 \end{align*}
 This proves \eqref{res8_1}.\\
 Now consider the relation
 \begin{align}
  I_q(p_{min}||r) &= \isumn u_i\; I_q(p_i||r_i) - \isumn u_i
\;\ln_q \frac{s_i}{u_i} \nonumber\\
 &\quad+ (1-q) \isumn \left( u_i\; \ln_q \frac{s_i}{u_i}\;
I_q(p_i||r_i)\right) \ico\label{xyz2}
 \end{align}
 which follows from \eqref{res3_2}. where
\begin{align*}
 p_i(x) &= p_{min}(x/x \in S_i)\ico\\
 r_i(x) &= r(x/x \in S_i)\ico\\
 s_i &= \sum_{x \in S_i} r(x)\ico\\
 \text{and }
 u_i &= \sum_{x \in S_i} p_{min}(x)\id
\end{align*}
From \eqref{res8_1} we have that $p_i(x)=r_i(x)$ and hence $I_q(p_i||r_i) = 0$.
Now equation \eqref{xyz2} reduces to 
\begin{align*}
 I_q(p_{min}||r) &= - \isumn u_i\;\ln_q \frac{s_i}{u_i}\nonumber\\
 &= I_q(\ig p_{min}||\ig r)\id
\end{align*}
This proves \eqref{res8_2} and \eqref{res8_3}.

\end{proof}
\section{Some Observations On Duality and Pythagoras}
\subsection{Pythagorean Property}
Because of its extensive use in many problems, Pythagorean property is very
important. It has been shown to exist for both second and third formalisms,
involving $q$-expectation and normalized $q$-expectation respectively. In this
section we have attempted to find the equivalent result for the classical
expectation. The result we got is not promising but we present it here for
future reference, and to introduce an alternative way to manipulate the Lagrange
multipliers. Lets formally state our problem at hand:
\paragraph{Problem statement : }
Let $r$ be the prior distribution and let $p$ be the posterior got by
minimizing the Tsallis divergence subject to the constraint set $\ic$
\begin{align*}
 \isumx p(x) u_m(x) = \iexpe{u_m} \quad m = 1\dots M\id\\
 \intertext{Let $l$ be another distribution satisfying the constraint}
 \isumx l(x) u_m(x) = \iexpe{w_m} \quad m = 1\dots M\id
\end{align*}
We are interested in finding the relation between $\iexpe{u_m}$ and
$\iexpe{w_m}$ so as to minimize the divergence $I_q(l||p)$.
\paragraph{Solution}
To find a solution to this problem we shall minimize the Tsallis divergence in
a different manner. We start the minimization with the following Lagrangian
\begin{eqnarray*}
 \lefteqn{\mathcal{L} =}\\
 &\isumx p(x)\frac{\left[\frac{p(x)}{r(x)}\right]^{q-1}-1}{q-1}
 - (1-q\lambda) \bigl( \isumx{p(x)} - 1\bigr) \\
 &+ \isumm q\beta_m(\isumx{u_m(x) p(x)} - \iexpe{u_m})\ico
\end{eqnarray*}
differentiating $\mathcal{L}$ with respect to $p(x)$  and equating to $0$, we
get
\begin{align}
 \ln_q \left(\frac{r(x)}{p(x)}\right) & = \lambda - \isumm
\beta_mu_m(x)\label{xyz4}\ico\\
 p_{min} &=p(x) = \frac{r(x)}{\exp_q(\lambda - \isumm\beta_mu_m(x))}\id
\end{align}
 Multiplying equation \eqref{xyz4} by $p(x)$ and summing it over $\ix$ we get
 \begin{align*}
  \isumx p(x) \;\ln_q \left(\frac{r(x)}{p(x)}\right) & =\isumx p(x) \lambda\\
&\qquad-\isumx \isumm p(x) \;\beta_mu_m(x)\ico\\
  -I_q^{min}(p||r) &= \lambda -  \isumm \beta_m\iexpe{u_m}\id
 \end{align*}
 Differentiating $I_q^{min}(p||r)$ with respect to $\iexpe{u_m}$ we get
 \begin{equation}
  \idiff{I_q^{min}}{\iexpe{u_m}} = \beta_m\label{leg2_1}\id
 \end{equation}
 Substituting 
 \begin{equation}
  \beta_m = \beta_m' (1+(1-q)\lambda)\ico
 \end{equation}
 equation \eqref{xyz4} reduces to
 \begin{align*}
   \ln_q \left(\frac{r(x)}{p(x)}\right) &=\lambda \oplus_q
-\isumm\beta_mu_m(x)\\
p(x) & = \frac{r(x)}{\iz \;\exp_q(-\isumm\beta_mu_m(x))}\ico
 \end{align*}
where $\iz = \exp_q(\lambda)$.
Hence equation \eqref{xyz4} can be rewritten as
\begin{align*}
 \ln_q \left(\frac{r(x)}{p(x)}\right) & = \ln_q\iz - \isumm
\beta_mu_m(x)\id
\end{align*}
 Multiplying this equation $p(x)$ and summing it over $\ix$ we get
\begin{align*}
 -I_q^{min} = \ln_q\iz - \isumm \beta_m \iexpe{u_m}\id
\end{align*}
 Differentiating $I_q^{min}$ with respect to $\beta_m$ and equating to $0$ we
get
\begin{equation}
 \idiff{\ln_q\iz}{\beta_m} = \iexpe{u_m}\label{leg2_2}\id 
\end{equation}
Equations \eqref{leg2_1} and  \eqref{leg2_2} are the Legendre transform
relations. Given the relations and the divergence minimization let us look at
the Pythagorean property.\\
We want to minimize the divergence $I_q (l||p)$. For this we will proceed as
follows
\begin{align*}
 I_q(l||r) &- I_q(l||p) = -\isumx l(x)\left[ \ln_q \frac{r(x)}{l(x)} - \ln_q
\frac{p(x)}{l(x)}\right]\ico
\intertext{using the relation $\ln_q \left(\frac{x}{y}\right) = y^{q-1}(\ln_qx -
\ln_qy)$, we get}
I_q(l||r) &- I_q(l||p) \\
&= -\isumx l(x)\left[\ln_q \frac{r(x)}{p(x)} \left[1+
(1-q) \ln_q\frac{p(x)}{l(x)}\right]\right]\ico
\end{align*}
using equation \eqref{xyz4}
\begin{align}
& I_q(l||r) - I_q(l||p) \nonumber \\
&= -\isumx l(x)\left[\left(\lambda - \isumm \beta_mu_m(x)\right)\right.\nonumber
\\
&\qquad \qquad \qquad \qquad \left.\left(1+ (1-q)
\ln_q\frac{p(x)}{l(x)}\right)\right]\nonumber\\
&= \lambda - \isumm \beta_m \iexpe{w_m} - (1-q) \lambda \; I_q(l||p)\nonumber\\
&\qquad-(1-q) \isumx \left(l(x) \;\ln_q\frac{p(x)}{l(x)} \; \isumm \beta_m
u_m(x)
\right) \label{xyz6}\id
\end{align}
The minimum of $I_q(l||p)$ is achieved for
\begin{align*}
 \idiff{I_q(l||p)}{\beta_m} = 0\id
\end{align*}
Differentiating \eqref{xyz6} we get
\begin{align*}
 &\idiff{\lambda}{\beta_m} - \iexpe{w_m} - (1-q) I_q(l||p)
\idiff{\lambda}{\beta_m} \\
&\;-(1-q) \isumx l(x) \idiff{}{\beta_m} \left[
\ln_q\frac{p(x)}{l(x)} \; \isumm \beta_m u_m(x)\right] = 0\id
\end{align*}
Using equation \eqref{leg2_2} we get
\begin{align}
 \iexpe{u_m} &- \iexpe{w_m} - \iexpe{u_m}\;I_q(l||p) \nonumber\\ 
 &=(1-q) \isumx l(x) \idiff{}{\beta_m} \left[
\ln_q\frac{p(x)}{l(x)} \; \isumm \beta_m u_m(x)\right]\nonumber\\
&=(1-q) \isumx l(x) \left[ \beta_m \ln_q\frac{p(x)}{l(x)} \right.\nonumber\\
&\qquad \qquad \qquad \left. + \isumm \beta_m u_m(x) \idiff{}{\beta_m} \left(
\ln_q\frac{p(x)}{l(x)}\right) \right]\id
\label{py1}
\end{align}
Evaluating it further we by using the relations $\ln_q\left(\frac{x}{y}\right) =
\frac{\ln_qx - \ln_q y}{1 + (1-q)\ln_qy}$ and $
 p(x) = \frac{r(x)}{\lambda - \isumm \beta_mu_m(x)}$,
We get
\begin{align}
 \iexpe{w_m} &=\nonumber\\
 &\iexpe{u_m}(1 -  I_q(l||p)) + (1-q)\beta_m I_q(l||p)\nonumber\\
 &- \isumx \frac{l^q(x)\;(u_m(x)- \iexpe{u_m}) \;\Psi}{\left[ 1 + (1-q) \ln_q
\frac{r(x)}{p(x)}\right]^2}\ico
\end{align}
where $\Psi = (1+(1-q) \ln_qr(x)) \isumm \beta_mu_m(x)$. Note that in this
expression $r(x)$ can be replaced in terms of $p(x)$.\\
Though this relation does not seem promising, we have mentioned it here for the
sake of completion.\\
%Here we would like to mention an attempt made by  Venkatesan and
%Plastino \cite{Venkatesan:2011:generalized} to find a similar Pythagorean
%property, which we have found to be wrong.
\subsection{Additive transformation - $\mathbf{q\leftrightarrow2-q}$}
In $q-$deformed algebra there exists a $q\leftrightarrow2-q$ duality. Which is
the following:
\begin{align}
 \ln_q (1/x) &= \ln_{2-q} (x)\ico\\
 \exp_q(-x) & = \frac{1}{\exp_{2-q} (x)}\label{xyz3}\id
\end{align}
Using this duality Tsallis entropy has been well studied, i.e., various
properties of $S_{2-q}$ has been studied. Initial observations regarding
$S_{2-q}$ were made by Baldovin and Robledo \cite{baldovin:2004:nonextensive}.
Naudts \cite{Naudts:2004:GeneralizedThermostatisticsAndMeanfieldTheory} has
further analyzed both
the dualities. More study has been carried forward by Wada and
Scarfone \cite{WadaScarfone:2005:ConnectionsBetweenTsallisFormalismEtc}. they
have found relations between the Lagrange multipliers of both the dualities. In
this section we introduce a similar transformation for Tsallis divergence.\\
Given a prior $r$ and the constraints set $\ic$ defined by
\begin{align}
\isumx{p(x)} &= 1\nonumber\ico\\
p(x) &\ge 0\nonumber\ico\\
\isumx{u_m(x) p(x)} &= \iexpe{u_m}, \quad m = 1,\dots,M\nonumber\id
\end{align}
from equation \eqref{px1_2} we have
\begin{equation*}
 p(x) = \frac{r(x)}{\iz \;\;\exp_q\Big(-\isumm\beta_mu_m(x)\Big)}\ico
\end{equation*}
and using the relation \eqref{xyz3} it becomes
\begin{equation*}
 p(x) = \frac{r(x)\;\;\exp_{2-q}\Big(\isumm\beta_mu_m(x)\Big)}{\iz }\id
\end{equation*}
This form for the posterior is very good and is the basis for the
$q\leftrightarrow2-q$ transformation. Note that 
\begin{equation*}
 2 - (2-q) = q\ico
\end{equation*}
i.e if we minimize $I_{2-q} (p||r)$ instead of $I_q (p||r)$, we have.
\begin{align*}
 p(x) &= \imin{p \in \ic} I_{2-q} (p||r)\\
 &=\frac{r(x)\;\;\exp_{q}\Big(\isumm\beta_mu_m(x)\Big)}{\iz }\id
\end{align*}
\section{Conclusion}
In this work we explored Shore and Johnson properties for Tsallis formalism of
the third kind involving normalized $q$-expectation, it was observed that none
of these properties hold for the formalism. Whereas in the study of first
formalism involving classical expectation, we have been able to establish
substantial number of Shore and Johnson properties. We were also able to
establish a crude form of Pythagorean relation. We have also been found a  $q
\leftrightarrow 2-q$ additive transformation, which gives a very good form for
the posterior distribution. We conclude from  these observations that the first
formalism is of stronger theoretical and practical significance; and these
results along with the $q \leftrightarrow 2-q$ additive transformation also
provides some ground work for definition of a power law family.
\bibliography{1}
\end{document}